\documentclass[runningheads]{llncs}
\usepackage[T1]{fontenc}

\usepackage{amssymb, mathtools, thm-restate}
\usepackage[bookmarks, bookmarksnumbered, linktocpage, urlcolor = cyan]{hyperref}
\usepackage{cleveref}

\newcommand{\E}{\mathbb{E}}
\newcommand{\Prob}{\mathbb{P}}

\begin{document}
\title{A contribution to the critique of blockchain censorship}

\author{Ruichao Jiang\inst{1} \and Michelle Yeo\inst{2,3} \and Long Wen\inst{1}}

\institute{Derivation Technology Ltd, Hong Kong, China  \email{\{ruichao.jiang,long.wen\}@derivation.info}\\ \and Aarhus University, Aarhus, Denmark\\ \email{mxyeo@cs.au.dk}
\and Nanyang Technological University }
\maketitle

\begin{abstract}
    We study the blockchain censorship attack introduced in \cite{yeo}, which shows that joining the attack is a dominant strategy. We show that, by introducing certain detectability threshold, joining the attack can lead to strictly less reward for whales, which are defined to be a small number of validators that hold significantly more voting power than the rest (henceforth known as minnows). This leads to a change of the equilibrium: With whales unwilling to participate in the attack, it is difficult for minnows alone to launch the attack. We also perform Monte Carlo simulation to show the existence of reduction for whales' reward in Ethereum and Solana.
    \keywords{Blockchains  \and Censorship resilience \and Attacks.}
\end{abstract}

\section{Introduction}
Blockchain censorship is a growing topic of interest and has been studied extensively on both the transaction-level (see~\cite{abraham,alpos,baum,cachin,kostiainen}),
and the consensus-level (see~\cite{bonneau,mcmorry}).
Recent work of Yeo and Zhang~\cite{yeo} analyzes rational censorship attacks in a blockchain model where validators arrive in random order and coordinate their attack intention via only a single message through a shared ``blackboard'' (i.e., the blockchain). 
They show that under some assumptions on attacking coalition size and order of validators joining the attack, both attacking the blockchain as well as joining the censorship  attack gives validators larger profits (formally, the attack is a subgame perfect equilibrium in the underlying game induced by the attack).
In so doing, they show how strategic interactions between validators can negatively impact the liveness of the underlying blockchain. 

The attack presented in~\cite{yeo} lies on four key assumptions.
Firstly, all players are rational, which means that as long as excluding some other validator can result in larger expected profits shared among the non-excluded validators, the non-excluded validators have an incentive to do so.
Secondly, the attacking coalition needs to have sufficient cumulative voting power, in order to prevent the excluded validators from taking over and excluding the attacking coalition.
Thirdly, the ``no big player'' assumption, which states that there is no player that controls too large amount of voting power such that the player has to be included in all attacking coalitions. 
Lastly, the attack also assumes the existence of a detectability threshold, which is an exclusion threshold $\eta$ such that excluding more than $\eta$ users will lead to the attack being detected and hence a larger cost (represented by the drop in the price of the underlying cryptocurrency) for the attacking validators relative to their gain in profits from the attack.

In this work, we reexamine the attack of~\cite{yeo} from a complementary angle. In particular, we note that under a specific and realistic distribution of voting power, there are parameter regimes over the attacking coalition size and total number of validators such that not all validators are guaranteed profits in expectation. 
In more detail, we model voting power as distributed between two types of validators: whales and minnows. 
Whale validators control noticeably larger voting power compared to minnows (see~\Cref{sec:whale-minnow} for more details), and it is these whale validators that suffer a loss in expected profits in joining the attack compared to not under certain parameter regimes. 
Consequently, this implies parameter regimes under which the original attack in~\cite{yeo} might not succeed, due to the whales' reluctance to join the attack (and insufficient cumulative voting power from the minnows).

We stress that our work does not contradict the result in~\cite{yeo}. 
Rather, our analysis relies on a subtle insight: although the model proposed by~\cite{yeo} imposes a detectability cost, their analysis assumes that detection is ``in-built'' into the attack and thus the attack either stops before being detected or remains undetected. This effectively guarantees only profits for the users who are in the attacking coalition. 
In our work, we allow for cases where the attack can be detected, and hence the detectability cost can actually be imposed.



\subsection{Our contributions}
Our contributions can be summarised as follows:

\begin{itemize}
    \item (Model.) We examine the original censorship attack proposed by~\cite{yeo} under a realistic distribution of validator voting power, where validators are split into whales (nodes that control significantly larger voting powers) and minnows (nodes that control uniformly small voting powers).
    \item (Analysis.) We analyse parameter regimes whereby whale nodes gain a lower expected profit compared to their expected profit under the honest strategy, due to the detection cost.
    \item (Numerical experiments.) We validate our theoretical bounds with Monte Carlo simulations using real-world stake distribution data from Ethereum and Solana.
\end{itemize}


\subsection{Related work}

\paragraph{Blockchain censorship attacks.}
Blockchain censorship is a topic of growing interest, especially in the light of recent top-down censorship mandated by governments to target malicious individuals and organisations (see e.g.,~\cite{ofac,us-blender,us-tornado-case}).
On the research front, blockchain censorship has mainly been studied in the context of miner extractable value (MEV) prevention (see, e.g.,~\cite{abraham,alpos,baum,cachin,kostiainen}).
A comprehensive survey of blockchain censorship can be found in~\cite{wahrstatter}. 
The work which is most related to ours is the censorship attack presented by~\cite{yeo}. Their attack is based on the simple observation that excluding some number of validators (i.e., ignoring their messages and blocks) would only benefit the non-excluded validators in that they will achieve a larger proportional share of block rewards. Additionally,~\cite{yeo} also assumes that too large an attack might be detectable and thus detrimental to the blockchain system. They model this by imposing a detection cost (i.e., a fixed cost representing the drop of the underlying cryptocurrency price) if too many validators are excluded. From these observations, they then show that (1) every validator has an incentive to propose forming a coalition that excludes some amount of validators, and (2) assuming validators join the coalition in a randomised order, every validator has an incentive to join the attacking coalition. Our work takes the attack of~\cite{yeo} as a starting point, but analyses the attack given a specific distribution of voting powers among validator nodes: validators are classified into whales and minnows with whales holding on to much more power compared to minnows. Our work shows that under this specific distribution of power, the detection cost might actually deter whales from joining the attack, which adds nuance to the original attack by~\cite{yeo}.

\paragraph{Detectability.} 
The threat of attack detectability and its use in disincentivising attacks was first introduced by~\cite{aumann} with their notion of a covert adversary: an adversary that wants to minimise attack detection. 
Follow up work in the blockchain setting also examined detectability costs in the context of selfish mining (see~\cite{bahrani}), cryptographic self-selection (see~\cite{cai}), and blockchain censorship (see~\cite{yeo}).
Notably, the work of~\cite{yeo} did not examine the impact of the detectability cost on their proposed attack.
Our work goes one important step further with the detectability cost proposed by~\cite{yeo}: we argue that assuming the cost is prohibitively large, there are attack settings where whales might not want to join the attack, which provides security against the ``costless'' censorship attacks as outlined in~\cite{yeo}.

\paragraph{Empirical analyses of stake distribution.}
Our work is also related to works that empirically analyse the distribution of power among validators in blockchains.
~\cite{grandjean} measure the impact of switching over from proof of work to proof of stake (PoS) in the decentralisation of Ethereum, and show that PoS Ethereum displays a huge concentration of power among a small number of validators. 
This property is also observed in~\cite{nir}, as well as a broader blockchain power centralisation literature review performed by~\cite{sai}. Altogether, this validates our model and study of the censorship attack using whales and minnows, as well as reinforces the results of our empirical analysis.

\subsection{Organisation}
We present notation and our model in~\Cref{sec:background}. In~\Cref{sec:whale-minnow}, we define our whale and minnow model and analyse the censorship attack under this specific whale and minnow power distribution. We then present our main theorem (~\Cref{thm:validity-range}) in~\Cref{sec:validity}, which shows precise conditions in which whale nodes would achieve a reduced profit from joining the attack compared to acting honestly. We supplement and justify our theoretical bounds with Monte Carlo simulations on real-world Ethereum and Solana data in~\Cref{sec:experiment}, and finally discuss open problems in~\Cref{sec:discussion}.

\section{Background and notation} \label{sec:background}

\paragraph{Notation.}
We use $[n]$ to denote the set $\{1, \dots, n\}$.
Given a set $S$, we use $\mathbf{1}\{S\}$ to denote the indicator function of $S$. For two functions $f$ and $g$, we follow the convention in probability theory to use $f\vee g$ to denote the pointwise maximum of two, i.e. $(f\vee g)(x)=\max\{f(x),g(x)\}$. If $g$ happens to be a number, it means a constant function.

\subsection{Glossary}\label{sec:Glossary}
For readability, we present a glossary of common notation used throughout this work.

\begin{description}
\item[$t$:] The requisite power threshold necessary to launch the censorship attack, which is described in~\Cref{sec:attackmodel}.
\item[$k$:] The attacking coalition size, i.e., number of parties.
\item[$W$ (resp. $M$):] Set of whale (resp. minnow) validator nodes.
\item[$\alpha$:] Voting power of each whale.
\item[$w$:] Number of whales.
\item[$S_k$:] Total voting power of an attacking coalition of size $k$ (\Cref{def:prefix-stake}).
\item[$K(t,k)$:] The actual number of participant when both the voting power and coalition size requirements are met.
\item[$N_k$:] Number of whales in an attacking coalition of size $k$ (\Cref{def:attacking-whales}).
\item[$N_0$:] Minimum number of whales in a successful attack of coalition size $k$ (\Cref{def:n0}).
\item[$\Phi(k,t)$:] Expected whale's gain for the whales that joined a successful attack (\Cref{def:whale's-gain}).
\item[$\overline{\Phi}(k,t)$:] An upper bound on $\Phi(k,t)$ (\Cref{proposition:tail-bound}).
\item[$m$:] Upper bound on $N_0$, the minimum number of whales in a successful attacking coalition.
\item[$\mathcal{F}$:] The feasible set of $m$.

\end{description}
Note that  $N_0$ and $m$ are just numbers but $S_k$, $K(t,k)$ and $N_k$ are random variables, whose sample space is the set of permutations. As such, they are functions of the arrival order $\pi$ (\Cref{def:pi}). We suppress this explicit dependence.
\subsection{Model}\label{sec:model}

\paragraph{Blockchain, system and threat model.}
We assume there are $n$ validators, indexed from $1$ to $n$. 
We assume an underlying blockchain system $S$ that is maintained by the aforementioned $[n]$ validators.
Each validator $x_i$ has some amount of voting power $v_i$ that determines the probability that they create the next block.
Wlog, we normalise all voting powers to sum to $1$, i.e., $\sum_{i=1}^n v_i =1$.
Each time a block is appended to the blockchain, the block reward is paid in full to the validator that proposed the block. 
Finally, we assume the underlying blockchain defines a protocol that
validators should follow, henceforth known as the \emph{honest strategy}. 

We assume time proceeds in discrete time steps where at each
time step a block is added to the blockchain. For simplicity purposes, we also
assume that changes in the state of the blockchain and financial system as a result of user actions at some time step will be immediately
reflected in the next time step, i.e., the consequences of user actions do
not incur any time lag.

We assume all validators are rational, i.e., each validator strives to maximise their utility. 
Moreover, we assume that each validator has the same utility function, which is to maximise their long-run average block reward.

\paragraph{Cost model.}
We assume the same cost model as in~\cite{yeo}. Namely, there is some cost if the censorship attack is detected.
Attack detectability is determined by the size of the attacking coalition, i.e., the attack is detectable if the attacking coalition size exceeds some detectability threshold $1-\eta$, and undetectable otherwise.
We assume the cost of detection is considerably larger than the expected rewards of launching the attack, thus validations halt the attack whenever the attacking coalition size exceeds the threshold (more details in the attack model below.).

\subsubsection{Attack model}\label{sec:attackmodel}
We summarise the attack model of~\cite{yeo}. In the attack model of~\cite{yeo}, the attack is initiated by posting a message to a smart contract on the blockchain signifying the intent to form an attacking coalition to censor messages from other users. Anyone can initiate this smart contract and thus the arrival order is uniform (We leave out the possibility that the validators can re-order the arrival order, which corresponds to a form of MEV). Following this initial message, the authors show if there is no requirement on the coalition size, the best response will be to always join the attack: If you don't join, you are sure to be censored. In our work, we model the uniformly random process in which validators join the attack by the following.

\begin{definition}[Arrival order] \label{def:pi}
An \emph{arrival order} is a permutation from the set of validators to the set of arrival orders
    \begin{equation*}
        \pi:\{1,\dots,n\}\to\{1,\dots,n\}.
    \end{equation*}
\end{definition}
Note that $\pi^{-1}(i)$ means the index of the validator that arrives in the $i$-th position. We therefore assume that we sample $\pi$ from a uniform distribution over all $n!$ permutations.

We also introduce the notion of $k$-sum, which tracks the total voting power accumulated by the first $k$ attacking validators. 

\begin{definition}[$k$-sum] \label{def:prefix-stake}
After $k$ arrivals, the voting power that attackers have accumulated is
    \begin{equation*}
        S_k(\pi)\coloneqq\sum_{i=1}^kv_{\pi^{-1}(i)}.
    \end{equation*}
    The dependence on $\pi$ can be omitted if no confusion is caused, especially when the subscript $k$ itself depends on the randomness of $\pi$.
\end{definition}

We now consider the stopping times of the random validator arrival process. Recall that in~\cite{yeo}, the random validator arrival process successfully halts when either one of the following conditions hold: (1) the total accumulated power of the validators that join the attack (in the order specified by the random arrival process) first exceeds the power threshold $t$, or (2) the attacking coalition size remains under the detectability threshold $1-\eta$, and hence the attackers remain undetected. We formally define these two conditions in~\Cref{def:Kpow,def:K-V-Pi} below. 

\begin{definition}[Power-only stopping time] \label{def:Kpow}
    \begin{equation*}
        K_p(t)\coloneqq\min\{j\geq1\mid S_j(\pi)\geq t\}.
    \end{equation*}
\end{definition}

\begin{definition}[Stopping time with coalition size requirement] \label{def:K-V-Pi}
    Let $k\leq n$ be the size of coalition of attackers. The stopping time with coalition size requirement is defined as follows.
    \begin{align*}
    K(t,k)\coloneqq&K_p(t)\vee k\\
    =&\min\{j\geq k\mid S_j(\pi)\geq t\}.
    \end{align*}
\end{definition}
\begin{remark}
    The detectability parameter $\eta$ in \cite{yeo} corresponds to $\frac{n-k}{n}$ in this article.
\end{remark}

\section{Whales and minnows} \label{sec:whale-minnow}
In this section, we study a particular distribution of voting power of practical interest.

We suppose there are only two types of validator nodes: whales $W$ with voting power $\alpha$ and minnows $M$ with voting power $\varepsilon$. 
Whales correspond to validators with large voting powers, and minnows correspond to validators with significantly smaller voting powers (compared to the whales).
The number of whales is $w$ and the number of minnows is $n-w$.

We assume that the number of whales is far less than that of minnows and the voting powers of whales are far greater than those of minnows, i.e.,
\begin{align*}
    &w=\mathcal{O}(1),\\
    &\frac{\alpha}{\varepsilon}=\mathcal{O}(n).
\end{align*}
Note that $\alpha/\varepsilon=\mathcal{O}(n)$ is where the model is interesting. Suppose for example that $\alpha/\varepsilon=\Theta(n^2)$. Then the collective voting power of minnows are $\mathcal{O}\left(\frac{1}{n}\right)=o(1)$: The minnows collectively hold near $0$ voting power in the large $n$ limit. Then being fair or unfair ceases to be a question.

We first note that in the absence of any attack, i.e., all validators act according to the honest strategy, the total expected utility for the whales is $\alpha w$, which we henceforth call the \emph{whale fair play value}.

\begin{definition}[Attacking whales] \label{def:attacking-whales}
    Define the number of whales appearing in the first $k$ arrivals as follows.
    \begin{equation*}
        N_k:=\left|\{1\leq i\leq k\mid \pi^{-1}(i)\in W\}\right|
    \end{equation*}
\end{definition}
Note that $N_k \sim \mathrm{Hypergeometric}(n,w,k)$, which comes from the assumption of the uniformly random arrival order of the validators (see~\Cref{def:pi}). Hence, the expected number of whales in the first $k$ arrivals is
\begin{equation*}
    \E[N_k]=\frac{kw}{n}.
\end{equation*}
And the variance is
\begin{equation*}
    \text{Var}[N_k]=\frac{kw(n-k)(n-w)}{n^2(n-1)}.
\end{equation*}
Using the number of whales appearing in the first $k$ arrivals $N_k$, we can now decompose the $k$-sum into two parts: the first part comprises of the total voting power held by the whales in $N_k$, and the second part comprises of the total voting power held by the remaining minnows.

\begin{equation}\label{eq:sum_decomp}
    S_k(N_k)=\alpha N_k+\varepsilon(k-N_k).
\end{equation}

Rearranging~\Cref{eq:sum_decomp}, we get that the following two events are identical. 

\begin{equation*}
    \{S_k\geq t\}=\left\{N_k\geq\left\lceil\frac{t-k\varepsilon}{\alpha-\varepsilon}\right\rceil\right\}.
\end{equation*}

Importantly, the above equation effectively equates the success of the attack (i.e., $\{S_k\geq t\}$ on the LHS) to the number of whales appearing in the first $k$ arrivals (i.e., $\left\{N_k\geq\left\lceil\frac{t-k\varepsilon}{\alpha-\varepsilon}\right\rceil\right\}$ on the RHS). 

\begin{definition} \label{def:n0}
    Define the minimum number of whales in first $k$ arrivals so that $S_k\geq t$ as follows.
    \begin{equation*}
        N_0\coloneqq\max\left\{0,\left\lceil\frac{t-k\varepsilon}{\alpha-\varepsilon}\right\rceil\right\}.
    \end{equation*}
\end{definition}

We now define the whale's gain as the expectation of the fraction of attacking whales' voting power over the attackers' voting power. This corresponds to the expected potential \emph{gain} in block rewards for the whales that successfully join the attack, assuming block rewards are split proportionately to power.

To motivate this definition we price the whales as a species, i.e., we track the reward of the $w$ whales collectively. Under the honest strategy each block reward is split in proportion to voting power, so every validator earns its own share in the long run and the whales earn their fair play value $\alpha w$. Under a successful attack only the coalition receives and divides the reward, so $\frac{1}{S_K}$ acts as a multiplier on each member's power. The coalition formed at $K$ holds $N_K$ whales of power $\alpha$ each, while the remaining $w-N_K$ whales are censored and earn nothing; fixing the arrival order, the strong law of large numbers shows that the whales' long-run average reward converges almost
surely to the random variable $\frac{\alpha N_K}{S_K}$, whose mean we take as the
whale's gain. Note that a successful attack need not be a success for the whales
as a species: those inside the coalition profit, but any whales left outside lose
everything---a loss already reflected in the numerator $\alpha N_K$.

\begin{definition}[Whale's gain] \label{def:whale's-gain}
    \begin{equation*}
        \Phi(k,t)\coloneqq\E\left[\frac{\alpha N_K}{S_K}\right]
    \end{equation*}
\end{definition}
It follows from the law of total probability that
\begin{align*}
    \Phi(k,t)&=\E\left[\frac{\alpha N_k}{S_k}\mathbf{1}\left\{S_k\geq t\right\}\right]+\E\left[\frac{\alpha N_{K_p}}{S_{K_p}}\mathbf{1}\{S_k<t\}\right]\\
    &\coloneqq\Phi_1(k,t)+\Phi_2(k,t).
\end{align*}

The following two lemmas decompose and bound each component of the whale's gain.~\Cref{lemma:phi-1} computes the whale's gain in the case where the attack succeeds.~\Cref{lemma:phi-2} upper bounds the whale's gain in the case of a failed attack.

\begin{lemma} \label{lemma:phi-1}
    \begin{equation*}
        \Phi_1(k,t)=\sum_{j= N_0}^w\Prob(N_k=j)\frac{\alpha j}{\alpha j+\varepsilon(k-j)}.
    \end{equation*}
\end{lemma}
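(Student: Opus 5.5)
The plan is to compute $\Phi_1(k,t)=\E\left[\frac{\alpha N_k}{S_k}\mathbf{1}\{S_k\geq t\}\right]$ directly, by conditioning on the value of the hypergeometric random variable $N_k$. The first step is to note that on the whale--minnow model, \Cref{eq:sum_decomp} expresses $S_k$ as a deterministic function of $N_k$, namely $S_k=\alpha N_k+\varepsilon(k-N_k)$. Hence the integrand $\frac{\alpha N_k}{S_k}\mathbf{1}\{S_k\geq t\}$ is itself a function $g(N_k)$ of $N_k$ alone. The expectation therefore reduces to a finite sum over the support of $N_k$:
\begin{equation*}
\Phi_1(k,t)=\sum_{j}\Prob(N_k=j)\,\frac{\alpha j}{\alpha j+\varepsilon(k-j)}\,\mathbf{1}\{\alpha j+\varepsilon(k-j)\geq t\}.
\end{equation*}
The support is contained in $\{0,\dots,w\}$, and any $j$ outside the hypergeometric support contributes zero probability anyway.

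The second step is to replace the indicator with a restriction on the summation range. Since $\alpha>\varepsilon$, the map $j\mapsto \alpha j+\varepsilon(k-j)=k\varepsilon+(\alpha-\varepsilon)j$ is strictly increasing. So $\alpha j+\varepsilon(k-j)\geq t$ holds iff $j\geq \frac{t-k\varepsilon}{\alpha-\varepsilon}$, and for integer $j$ this is iff $j\geq\left\lceil\frac{t-k\varepsilon}{\alpha-\varepsilon}\right\rceil$. This is exactly the event identity displayed before \Cref{def:n0}. Combining with $j\geq 0$, the surviving indices are precisely $j\geq N_0$ with $N_0$ as in \Cref{def:n0}, which yields the claimed formula with the sum running from $N_0$ to $w$.

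The only point needing care is the degenerate terms, and this is where I expect the (mild) obstacle. One case is $j=0$ when $N_0=0$: the ratio is then $0$ (the denominator $k\varepsilon>0$ since $k\geq 1$), so the term is harmless. Another is $N_0>w$: then the sum is empty and the event $\{S_k\geq t\}$ is impossible, so both sides vanish consistently. I would also remark that we use $k\geq1$ and $\varepsilon>0$, so $S_k>0$ and the ratio is always well defined.
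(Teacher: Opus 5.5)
Your proposal is correct and matches the paper's proof. The paper likewise substitutes $S_k=\alpha N_k+\varepsilon(k-N_k)$, replaces $\mathbf{1}\{S_k\geq t\}$ by $\mathbf{1}\{N_k\geq N_0\}$, and expands the expectation over the values of $N_k$; your explicit treatment of the $\max\{0,\cdot\}$ in $N_0$ and of the degenerate cases is extra care that the paper leaves implicit.
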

\begin{proof}
    This follows from
    \begin{equation*}
        \E\left[\frac{\alpha N_k}{S_k}\mathbf{1}\left\{S_k\geq t\right\}\right]=\E\left[\frac{\alpha N_k}{\alpha N_k+\varepsilon(k-N_k)}\mathbf{1}\{N_k\geq N_0\}\right].
    \end{equation*}\qed
\end{proof}
\begin{lemma} \label{lemma:phi-2}
    \begin{equation*}
        \Phi_2(k,t)\leq\Prob(N_k<N_0)\frac{\alpha N_0}{t}.
    \end{equation*}    
\end{lemma}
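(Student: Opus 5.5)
The plan is a pointwise bound on the integrand of $\Phi_2(k,t)$ on the event $\{S_k<t\}$, followed by taking expectations. First, on $\{S_k<t\}$ the stopping time with coalition size requirement (\Cref{def:K-V-Pi}) coincides with the power-only stopping time (\Cref{def:Kpow}), and $K_p>k$. Assuming $t\le 1=S_n$, $K_p$ is finite. By the identity of events derived just before \Cref{def:n0}, $\{S_k<t\}=\{N_k<N_0\}$. So it suffices to show that on this event
\begin{equation*}
\frac{\alpha N_{K_p}}{S_{K_p}}\le\frac{\alpha N_0}{t}.
\end{equation*}
Taking expectations then gives $\Phi_2(k,t)\le \Prob(N_k<N_0)\,\alpha N_0/t$.

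For the denominator, $S_{K_p}\ge t$ by the definition of $K_p$. The numerator is the main point: I need $N_{K_p}\le N_0$ even though the coalition at time $K_p$ is larger than $k$. Write $j=K_p$, so $j-1\ge k$ and $S_{j-1}<t$ by minimality of $j$. Applying the decomposition \eqref{eq:sum_decomp} at time $j-1$ gives
\begin{equation*}
\alpha N_{j-1}+\varepsilon(j-1-N_{j-1})<t,
\end{equation*}
hence
\begin{equation*}
N_{j-1}<\frac{t-(j-1)\varepsilon}{\alpha-\varepsilon}\le\frac{t-k\varepsilon}{\alpha-\varepsilon},
\end{equation*}
using $j-1\ge k$ and $\alpha>\varepsilon$. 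Since $N_{j-1}$ is an integer, $N_{j-1}\le\bigl\lceil\frac{t-k\varepsilon}{\alpha-\varepsilon}\bigr\rceil-1\le N_0-1$. One more arrival adds at most one whale, so $N_j\le N_{j-1}+1\le N_0$.

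The only delicate point is the degenerate case. If $\frac{t-k\varepsilon}{\alpha-\varepsilon}\le 0$, then $N_0=0$ and $S_k\ge k\varepsilon\ge t$ always holds. The event $\{S_k<t\}$ is then empty and both sides of the claimed inequality are $0$, so the statement is trivial there. Combining the two bounds on the event and integrating finishes the proof.
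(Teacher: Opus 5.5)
Your proof is correct and follows essentially the same route as the paper. On $\{S_k<t\}=\{N_k<N_0\}$ you use $K=K_p>k$, bound the denominator by $S_{K_p}\ge t$, and use minimality of $K_p$ to get $N_{K_p}\le N_0$. The only cosmetic difference is that the paper bounds $S_{K_p}<t+\alpha$ and solves for $N_{K_p}$ directly, whereas you get $N_{K_p-1}\le N_0-1$ from $S_{K_p-1}<t$ and then add at most one whale. Your separate degenerate-case check is harmless but unnecessary, since the main integer argument already shows the event is empty when $N_0=0$.
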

\begin{proof}
    On $S_k<t$, the power-only stopping time is in effect. Hence,
    \begin{equation*}
        K_p\geq k+1
    \end{equation*}
    and
    \begin{equation*}
        S_{K_p-1}<t\leq S_{K_p}<t+\alpha.
    \end{equation*}
    It follows that
    \begin{align*}
        (\alpha-\varepsilon)N_{K_p}&=S_{K_p}-\varepsilon K_p\\
        &<t+\alpha-\varepsilon(k+1),
    \end{align*}
    where the first line is a rearrangement of  $S_{K_p}=\alpha N_{K_p}+\varepsilon\left(K_p-N_{K_p}\right)$.
    
    Dividing by $\alpha-\varepsilon$,
    \begin{equation*}
        N_{K_p}<\frac{t-k\varepsilon}{\alpha-\varepsilon}+1\leq N_0+1\implies N_{K_p}\leq N_0.
    \end{equation*}
    Hence,
    \begin{equation*}
        \Phi_2\leq\E\left[\frac{\alpha N_0}{t}\mathbf{1}\{S_k<t\}\right]=\Prob(N_k<N_0)\frac{N_0\alpha}{t}.
    \end{equation*}\qed
\end{proof}
We search for an upper bound of the whale's gain $\Phi$ independent of the number of attackers $k$. First, let $m\in\mathbb{N}$ be an upper bound on $N_0$ uniform in $k$. That is, $m \geq N_0$, where $N_0$ can depend on $k$, but $m$ does not depend on $k$.
We first show an upper bound of $\Phi$ using the hypergeometric distribution in~\Cref{proposition:tail-bound} below.

\begin{restatable}[]{proposition}{tailbound}
\label{proposition:tail-bound}
For $m\geq N_0$,
    \begin{equation*}
        \Phi(k,t)<\sum_{j=m}^w\Prob(N_k=j)\frac{\alpha j}{\alpha j+\varepsilon(k-j)}+\Prob(N_k<m)\frac{\alpha m}{t}\coloneqq\overline{\Phi}(k,t).
    \end{equation*}
    We call the RHS of the above the \emph{hypergeometric upper bound} of the whale's gain.
\end{restatable}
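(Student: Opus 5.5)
The plan is to start from the decomposition $\Phi=\Phi_1+\Phi_2$ and apply \Cref{lemma:phi-1} and \Cref{lemma:phi-2}. This gives
\begin{equation*}
\Phi(k,t)\leq\sum_{j=N_0}^w\Prob(N_k=j)\,f(j)+\Prob(N_k<N_0)\frac{\alpha N_0}{t},\qquad f(j)\coloneqq\frac{\alpha j}{\alpha j+\varepsilon(k-j)}.
\end{equation*}
The task is then to move the cutoff from $N_0$ up to $m$ in both pieces. No new probabilistic input is needed; it is a term-by-term comparison.

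\textbf{Step 1: the terms $N_0\leq j<m$.} For these $j$ the event $\{N_k=j\}$ lies inside $\{S_k\geq t\}$. Indeed $j\geq N_0$, and by the event identity preceding \Cref{def:n0} this gives $S_k=\alpha j+\varepsilon(k-j)\geq t$. Hence
\begin{equation*}
f(j)=\frac{\alpha j}{S_k}\leq\frac{\alpha j}{t}<\frac{\alpha m}{t},
\end{equation*}
and therefore
\begin{equation*}
\sum_{j=N_0}^{m-1}\Prob(N_k=j)f(j)\leq\Prob(N_0\leq N_k<m)\frac{\alpha m}{t}.
\end{equation*}

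\textbf{Step 2: the failed-attack term.} Since $m\geq N_0$, we have
\begin{equation*}
\Prob(N_k<N_0)\frac{\alpha N_0}{t}\leq\Prob(N_k<N_0)\frac{\alpha m}{t}.
\end{equation*}
Adding this to the bound from Step 1, the two probabilities combine into $\Prob(N_k<m)\frac{\alpha m}{t}$. The remaining sum over $j\geq m$ is left untouched, and the result is exactly $\overline{\Phi}(k,t)$.

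\textbf{Main obstacle: strictness.} The delicate point is getting the \emph{strict} inequality. Step 1 is strict whenever $m>N_0$ and $\Prob(N_0\leq N_k<m)>0$, because $j<m$ there; Step 2 is strict when $m>N_0$ and $\Prob(N_k<N_0)>0$. The remaining case is $m=N_0$, where strictness has to come from \Cref{lemma:phi-2} itself. I would revisit that proof. On $\{S_k<t\}$ we have $\alpha N_{K_p}/S_{K_p}\leq \alpha N_0/t$, and equality requires simultaneously $N_{K_p}=N_0$ and $S_{K_p}=t$ exactly. I expect to argue that the last arrival at $K_p$ pushes the sum from below $t$ to at least $t$, so equality forces $t$ to coincide with an attainable value $\alpha N_0+\varepsilon(K_p-N_0)$. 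This is a non-generic coincidence. If it cannot be excluded outright, I would record that in this degenerate configuration the bound holds with $\leq$, which is all that is used downstream.
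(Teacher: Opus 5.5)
Your argument is the paper's proof. You apply \Cref{lemma:phi-1,lemma:phi-2}, bound each term with $N_0\le j<m$ by $\alpha m/t$ (using $S_k(j)\ge t$ and $j<m$), and raise $\alpha N_0/t$ to $\alpha m/t$ in the failed-attack term. This fully establishes $\Phi(k,t)\le\overline{\Phi}(k,t)$.

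Your strictness discussion is more careful than the paper's, whose final ``$<$'' is unjustified when $m=N_0$ because the shifted sum is then empty. However, your diagnosis of where strictness can fail is off.

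The coincidence you worry about cannot happen. On $\{S_k<t\}$ you have $K_p\ge k+1$, so $N_{K_p}=N_0$ forces $S_{K_p}=(\alpha-\varepsilon)N_0+\varepsilon K_p\ge (t-k\varepsilon)+\varepsilon(k+1)=t+\varepsilon$. In the other case, $N_{K_p}\le N_0-1$ gives $\alpha N_{K_p}/S_{K_p}\le\alpha(N_0-1)/t$. So the bound of \Cref{lemma:phi-2} is strict pointwise on $\{S_k<t\}$, and hence strict whenever $\Prob(N_k<N_0)>0$.

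What your case split misses is $\Prob(N_k<m)=0$, i.e.\ $m\le\max\{0,k-n+w\}$. An example is $m=N_0=0$, which is exactly the choice $m^*=0$ made in \Cref{sec:experiment}. In that case $\Phi_2=0$ and every shifted term has zero probability, so $\Phi(k,t)=\overline{\Phi}(k,t)$ exactly.

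The correct statement is therefore $\Phi\le\overline{\Phi}$, with strict inequality if and only if $\Prob(N_k<m)>0$. As you say, only the non-strict version is used downstream, because \Cref{proposition:uniform-bound-for-hypergeometric} supplies its own strict inequality.
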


\begin{proof}
    By \Cref{def:n0}, for all integers $j\in[N_0,m]$, we have $S_k(j)>t$. Hence,
    \begin{align*}
        \sum_{j=N_0}^{m-1}\Prob(N_k=j)\frac{\alpha j}{\alpha j+\varepsilon(k-j)}&<\sum_{j=N_0}^{m-1}\Prob(N_k=j)\frac{\alpha m}{t}\\        &=\Prob(N_0\leq N_k<m)\frac{\alpha m}{t}.
    \end{align*}
    By \Cref{lemma:phi-1,lemma:phi-2},
    \begin{align*}
        \Phi(k,t)&\leq\sum_{j=N_0}^w\Prob(N_k=j)\frac{\alpha j}{\alpha j+\varepsilon(k-j)}+\Prob(N_k<N_0)\frac{\alpha N_0}{t}\\
        &=\left(\sum_{j=N_0}^{m-1}+\sum_{j=m}^w\right)\Prob(N_k=j)\frac{\alpha j}{\alpha j+\varepsilon(k-j)}+\Prob(N_k<N_0)\frac{\alpha N_0}{t}\\
        &<\sum_{j=m}^w\Prob(N_k=j)\frac{\alpha j}{\alpha j+\varepsilon(k-j)}+\Prob(N_k<m)\frac{\alpha m}{t}.
    \end{align*}\qed
\end{proof}

\Cref{proposition:tail-bound} requires $m>N_0$, which is equivalent to the following condition on $k$
\begin{equation}
   k_m\coloneqq\left\lceil m+\frac{t-\alpha m}{\varepsilon}\right\rceil\leq k.
\end{equation}
We define the following auxiliary quantity
\begin{equation*}
    \Lambda(m)\coloneqq\sum_{j=0}^{m-1}\binom{w}{j}
\end{equation*}
with the convention $\Lambda(0)=0$.

\begin{restatable}[]{lemma}{probbound}
\label{lemma:prob-bound}
For $w\leq k<n$ and $m<w+1$,
    \begin{equation*}
        \Prob(N_k<m)\leq\Lambda(m)\left(\frac{n-k}{n-w}\right)^{w-m+1}.
    \end{equation*}
\end{restatable}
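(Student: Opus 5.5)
The plan is to bound each point probability $\Prob(N_k=j)$ for $0\leq j\leq m-1$ by $\binom{w}{j}\left(\frac{n-k}{n-w}\right)^{w-m+1}$. Summing over $j$ then gives exactly $\Lambda(m)$ times that power, which is the claim. If $m=0$ both sides vanish, so assume $m\geq 1$. The hypotheses $m<w+1$ and $j\leq m-1$ give $j\leq w-1$, so $w-j\geq 1$.

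\emph{Step 1 (rewrite the pmf).} Write the hypergeometric law in falling factorials $(x)_r=x(x-1)\cdots(x-r+1)$, choosing the form that isolates $\binom{w}{j}$. Fix which $j$ of the $w$ whales fall among the first $k$ positions. The probability that the whales occupy a particular admissible set of positions is then counted from the positions' side:
\begin{equation*}
    \Prob(N_k=j)=\binom{w}{j}\frac{(k)_j\,(n-k)_{w-j}}{(n)_w}
    =\binom{w}{j}\cdot\frac{(k)_j}{(n)_j}\cdot\frac{(n-k)_{w-j}}{(n-j)_{w-j}},
\end{equation*}
using $(n)_w=(n)_j(n-j)_{w-j}$. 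I will verify this identity against the textbook form $\binom{w}{j}\binom{n-w}{k-j}/\binom{n}{k}$ by expanding the binomials; this is routine.

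\emph{Step 2 (bound the two ratios).} The first ratio is a product of the factors $(k-i)/(n-i)$, each at most $1$ because $k<n$, so $(k)_j/(n)_j\leq 1$. The second ratio is $\prod_{i=0}^{w-j-1}\frac{n-k-i}{n-j-i}$. Since $k\geq w>j$, the numerator is smaller than the denominator, and subtracting $i$ from both only decreases the ratio. Hence each factor is at most $\frac{n-k}{n-j}\leq\frac{n-k}{n-w}$, where the last step uses $j\leq w$. This yields
\begin{equation*}
    \Prob(N_k=j)\leq\binom{w}{j}\left(\frac{n-k}{n-w}\right)^{w-j}.
\end{equation*}

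\emph{Step 3 (uniformize the exponent).} Because $k\geq w$, the base satisfies $0\leq\frac{n-k}{n-w}\leq 1$. Also $j\leq m-1$ gives $w-j\geq w-m+1$. So the power can be lowered to the common exponent $w-m+1$. Summing $\binom{w}{j}$ over $j=0,\dots,m-1$ produces $\Lambda(m)$, and the lemma follows.

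The only delicate point is Step 1: choosing the representation of the hypergeometric pmf in which the $\binom{w}{j}$ factor appears with a ratio of falling factorials that is easy to bound. Working from $\binom{n-w}{k-j}/\binom{n}{k}$ directly would be messier. Everything after that is monotonicity, where the hypotheses $w\leq k<n$ and $m\leq w$ are exactly what make the base at most $1$ and the factor-wise bounds valid.
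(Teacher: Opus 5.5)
Your proposal is correct and follows essentially the same route as the paper: write $\Prob(N_k=j)=\binom{w}{j}\prod_{i=0}^{j-1}\frac{k-i}{n-i}\prod_{i=0}^{w-j-1}\frac{n-k-i}{n-j-i}$, drop the first product, bound each factor of the second by $\frac{n-k}{n-w}$, then use $\frac{n-k}{n-w}\le 1$ together with $w-j\ge w-m+1$ to pass to the common exponent, and sum to get $\Lambda(m)$. Your non-strict inequalities and separate treatment of $m=0$ are slightly cleaner than the paper's chain, whose last displayed step is strict but is actually an equality when $j=m-1$.
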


\begin{proof}
    For $0\leq j\leq w$
    \begin{align*}
        \Prob(N_k=j)&=\binom{w}{j}\prod_{i=0}^{j-1}\frac{k-i}{n-i}\prod_{i=0}^{w-j-1}\frac{n-k-i}{n-j-i}\\
        &<\binom{w}{j}\prod_{i=0}^{w-j-1}\frac{n-k}{n-j-(w-j)}\\
        &=\binom{w}{j}\left(\frac{n-k}{n-w}\right)^{w-j}\\
        &<\binom{w}{j}\left(\frac{n-k}{n-w}\right)^{w-m+1}.
    \end{align*}
    Summing over $j$, the result is proved.\qed
\end{proof}

We now have the requisite ingredients to derive an upper bound of the upper bound of the whale's gain $\overline{\Phi}$, which is independent of the attacking coalition size $k$.
Recall that the whale fair play value is the expected return of the whales when all validators follow the honest strategy.
In the following proposition, we show that for $k$ within a certain interval, the upper bounded whale's gain $\overline{\Phi}$ is upper bounded by the whale fair play value $\alpha w$. 
This immediately implies, assuming the upper bound on the whale's gain is tight enough, the existence of attacking coalition size parameter regions whereby whales might not join the attack, as they get a larger expected reward in the case where everyone behaves honestly.

\begin{restatable}[Uniform bound of $\overline{\Phi}$]{proposition}{uniformbound}
\label{proposition:uniform-bound-for-hypergeometric}
$\overline{\Phi}(k,t)<\alpha w$ for all $k$ satisfying the following inequality
    \begin{equation} \label{eqn:algebraic-inequality}
        \frac{\alpha(\alpha-\varepsilon)\varepsilon k^2w(n-k)(n-w)}{n^2(n-1)}\geq\frac{\alpha m}{t}\Lambda\left(\frac{n-k}{n-w}\right)^{w-m+1}.
    \end{equation}
\end{restatable}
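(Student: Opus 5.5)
The plan is to compare the two pieces of $\overline{\Phi}(k,t)$ with $\alpha w$ separately. The first piece is handled by a Jensen-type argument with an explicit concavity gap. The second piece is absorbed by \Cref{lemma:prob-bound} and the hypothesis \eqref{eqn:algebraic-inequality}. Throughout I assume the range of \Cref{lemma:prob-bound}, i.e.\ $w\le k<n$ and $m\le w$, and I take $\Lambda=\Lambda(m)$.

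First, define on real $x\in[0,w]$ the function
\begin{equation*}
f(x)\coloneqq\frac{\alpha x}{(\alpha-\varepsilon)x+\varepsilon k}.
\end{equation*}
Its values at integers $j$ are the summands $\frac{\alpha j}{\alpha j+\varepsilon(k-j)}$. Since $f\ge 0$, dropping the restriction $j\ge m$ gives
\begin{equation*}
\sum_{j=m}^w\Prob(N_k=j)f(j)\le \E[f(N_k)].
\end{equation*}
Next, evaluate $f$ at the mean $\mu=\E[N_k]=kw/n$. The denominator there is $k(\alpha w+\varepsilon(n-w))/n=k/n$, using the normalisation $\alpha w+\varepsilon(n-w)=1$. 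Hence $f(\mu)=\alpha w$, which is exactly the whale fair play value.

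The key step, and the one I expect to be the main obstacle, is a quantitative Jensen inequality. The second derivative is
\begin{equation*}
f''(x)=-\frac{2\alpha\varepsilon k(\alpha-\varepsilon)}{((\alpha-\varepsilon)x+\varepsilon k)^3}.
\end{equation*}
For $x\in[0,w]$ the denominator is at most $\alpha w+\varepsilon(k-w)\le 1$, since this is the voting power of a set of at most $n$ validators. Hence $f''(x)\le-2\alpha\varepsilon k(\alpha-\varepsilon)$ uniformly, so $f$ is strongly concave with that constant. Taylor's theorem with Lagrange remainder about $\mu$ then gives
\begin{equation*}
f(x)\le f(\mu)+f'(\mu)(x-\mu)-\alpha\varepsilon k(\alpha-\varepsilon)(x-\mu)^2 .
\end{equation*}
Taking expectations at $x=N_k$ kills the linear term and yields
\begin{equation*}
\E[f(N_k)]\le \alpha w-\alpha(\alpha-\varepsilon)\varepsilon k\,\mathrm{Var}[N_k]=\alpha w-\frac{\alpha(\alpha-\varepsilon)\varepsilon k^2w(n-k)(n-w)}{n^2(n-1)},
\end{equation*}
where the last step inserts the hypergeometric variance. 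The care needed is in checking that the denominator bound holds on the whole interval between $\mu$ and every attainable value of $N_k$; both endpoints lie in $[0,w]$, so it does.

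For the second piece, \Cref{lemma:prob-bound} gives
\begin{equation*}
\Prob(N_k<m)\frac{\alpha m}{t}\le \frac{\alpha m}{t}\Lambda\left(\frac{n-k}{n-w}\right)^{w-m+1}.
\end{equation*}
By hypothesis \eqref{eqn:algebraic-inequality}, this is at most the variance term subtracted above. Adding the two pieces gives $\overline{\Phi}(k,t)\le\alpha w$.

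Strictness comes from the strict inequality in the proof of \Cref{lemma:prob-bound}. Alternatively, it comes from the strict concavity gap whenever $\mathrm{Var}[N_k]>0$, which holds for $w\le k<n$ with $w\ge 1$. In the degenerate case $m=0$ the second piece vanishes and strictness follows from the concavity gap alone.
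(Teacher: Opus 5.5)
Your proof is correct and follows essentially the same route as the paper. Both arguments take the strong-concavity (quantitative Jensen) gap of $\frac{\alpha j}{\alpha j+\varepsilon(k-j)}$ around $\E[N_k]=kw/n$, where it equals $\alpha w$, and absorb the lower-tail term via \Cref{lemma:prob-bound} and the hypothesis \eqref{eqn:algebraic-inequality}; your explicit justification of the curvature constant via $(\alpha-\varepsilon)x+\varepsilon k\le 1$ and your handling of strictness (including $m=0$) are slightly more careful than the paper's write-up.
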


\begin{proof}
    Define
    \begin{equation*}
        g(j,k)\coloneqq\frac{\alpha j}{\alpha j+\varepsilon(k-j)}.
    \end{equation*}
    Then,
    \begin{equation*}
        \frac{\partial^2g}{\partial j^2}=\frac{-2\alpha(\alpha-\varepsilon)\varepsilon k}{[\alpha j+\varepsilon(k-j)]^3}\leq-2\alpha(\alpha-\varepsilon)\varepsilon k.
    \end{equation*}
    Hence, $g(j,k)$ is $2\alpha(\alpha-\varepsilon)\varepsilon k$-strongly concave in $j$.

    By Jensen's inequality for strongly concave functions,
    \begin{align*}
        g(\E[N_k],k)-\E[g(N_k,k)]&\geq\frac{2\alpha(\alpha-\varepsilon)\varepsilon k}{2}\text{Var}[N_k]\\
        &=\frac{\alpha(\alpha-\varepsilon)\varepsilon k^2w(n-k)(n-w)}{n^2(n-1)}\\
        &\coloneqq f(k).
    \end{align*}
    On the other hand,
    \begin{align*}
        &\E[g(N_k,k)]=\sum_{j=0}^w\Prob(N_k=j)g(j,k),\\
        &g(\E[N_k],k)=\frac{\alpha\E[N_k]}{\alpha\E[N_k]+\varepsilon(k-\E[N_k])}=\frac{\alpha w}{\alpha w+\varepsilon(n-w)}=\alpha w.
    \end{align*}
    Then, for $m\geq1$,
    \begin{align*}
        \overline{\Phi}(k,t)&=\E[g(N_k,k)]+\sum_{j=0}^{m-1}\Prob(N_k=j)\left[\frac{\alpha m}{t}-g(j,k)\right]\\
        &\leq\alpha w-f(k)+\sum_{j=0}^{m-1}\Prob(N_k=j)\left[\frac{\alpha m}{t}-g(j,k)\right]\\
        &\coloneqq\alpha w-f(k)+\Delta(k).
    \end{align*}
    Hence, $f(k)>\Delta(k)\implies\overline{\Phi}(k,t)<\alpha w.$
    
    By \Cref{lemma:prob-bound}
    \begin{equation*}
        \Delta(k)<\sum_{j=0}^{m-1}\Prob(N_k=j)\frac{\alpha m}{t}<\frac{\alpha m}{t}\Lambda\left(\frac{n-k}{n-w}\right)^{w-m+1}.
    \end{equation*}
    Hence, \Cref{eqn:algebraic-inequality} is a sufficient condition for $f(k)>\Delta(k)$.\qed
\end{proof}

We now extract from \Cref{eqn:algebraic-inequality} an explicit condition on $k$. 

\begin{restatable}[]{lemma}{uniqueroot}
\label{lemma:unique-root}
For $m<w$, the following equation in $k$ has a unique root $\hat{k}\in[0,n)$.
    \begin{equation*}
        \frac{\alpha(\alpha-\varepsilon)\varepsilon k^2w(n-k)(n-w)}{n^2(n-1)}=\frac{\alpha m}{t}\Lambda\left(\frac{n-k}{n-w}\right)^{w-m+1}.
    \end{equation*}
\end{restatable}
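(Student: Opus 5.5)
The plan is to reduce the equation to comparing a strictly increasing function with a strictly decreasing one, and then conclude by the intermediate value theorem. Write the equation as $f(k)=h(k)$, where
\begin{equation*}
f(k)=c\,k^2(n-k),\qquad c\coloneqq\frac{\alpha(\alpha-\varepsilon)\varepsilon w(n-w)}{n^2(n-1)},
\end{equation*}
and
\begin{equation*}
h(k)=d\,(n-k)^{p},\qquad d\coloneqq\frac{\alpha m}{t}\Lambda\,(n-w)^{-p},\qquad p\coloneqq w-m+1.
\end{equation*}
Here $\Lambda=\Lambda(m)$. Under the standing assumptions ($\alpha>\varepsilon>0$, $1\le w<n$) we have $c>0$. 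The hypothesis $m<w$ gives $p\geq 2$.

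First I dispose of the degenerate case $m=0$. Then $\Lambda(0)=0$, so $h\equiv0$, and $f(k)=0$ on $[0,n)$ holds only at $k=0$. So $\hat k=0$ is the unique root.

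For $1\le m<w$ we have $\Lambda(m)\ge\binom{w}{0}=1$, so $d>0$. On $[0,n)$ the factor $n-k$ is strictly positive, so I divide both sides by it. The equation becomes $F(k)=0$, where
\begin{equation*}
F(k)\coloneqq c\,k^2-d\,(n-k)^{p-1}.
\end{equation*}
This step is where $m<w$ is used: it guarantees $p-1\geq1$. Then $k\mapsto d(n-k)^{p-1}$ is strictly decreasing on $[0,n)$. Also $k\mapsto ck^2$ is strictly increasing on $[0,n)$. Hence $F$ is continuous and strictly increasing, so it has at most one zero in $[0,n)$.

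For existence, note that $F(0)=-d\,n^{p-1}<0$. Also $F(k)\to c\,n^2>0$ as $k\to n^-$, since $p-1\ge1$ makes the second term vanish. By the intermediate value theorem there is a zero $\hat k\in(0,n)$. It is unique by monotonicity.

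The only delicate point is the boundary behaviour at $k=n$. The original equation has the common root $k=n$ on both sides, and the interval $[0,n)$ is chosen precisely to exclude it. Dividing by $n-k$ removes this spurious root, and $p\ge2$ ensures the reduced right-hand side still vanishes at $n$, so the sign change is genuine.

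As a by-product, the same monotonicity shows that \Cref{eqn:algebraic-inequality} holds exactly for $k\in[\hat k,n)$. This is the form needed afterwards.
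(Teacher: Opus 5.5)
Your proof is correct and follows the same route as the paper. You handle $m=0$ via $\Lambda(0)=0$, then for $1\le m<w$ you cancel powers of $n-k$ to get a strictly increasing function on $[0,n)$ that is negative at $0$ and positive near $n$, and conclude by the intermediate value theorem. The only cosmetic difference is that the paper divides by the whole right-hand side, obtaining $Ck^2/(n-k)^{w-m}-1\to+\infty$, whereas you divide only by $n-k$ and get the finite positive limit $cn^2$.
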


\begin{proof}
    If $m=0$, $k=0$ is the unique solution.

    Suppose that $m\geq1$. Define
    \begin{equation*}
        f(k)\coloneqq\frac{Ck^2}{(n-k)^{w-m}}-1,
    \end{equation*}
    where
    \begin{equation*}
        C\coloneqq\frac{(\alpha-\varepsilon)\varepsilon(n-w)^{w-m+2}wt}{m\Lambda n^2(n-1)}=\frac{(\alpha n-1)(1-\alpha w)(n-w)^{w-m}wt}{m\Lambda n^2(n-1)}.
    \end{equation*}
    Then any root of the original equation is a root of $f(k)=0$ and vice versa.
    
    Since $f(0)=-1$ and $\lim_{k\to n^-}f(k)=+\infty$, there exists a root in $(0,n)$.

    Since $m<w$, $f$ is increasing in $k$. Hence, the root is also unique.\qed
\end{proof}

By the monotonicity of $f(k)$, the condition in \Cref{proposition:uniform-bound-for-hypergeometric} is equivalent to $\hat{k}\leq k<n$.

\section{Analysis of the validity range}\label{sec:validity}
In the previous section, we showed in~\Cref{proposition:uniform-bound-for-hypergeometric} an upper bound on the upper bounded whale's gain. In this section, we determine the conditions under which a reduction on whales' profit is guaranteed, i.e., we show an upper bound on the actual whale's gain. 
Indeed, our main theorem in this section~\Cref{thm:validity-range} shows that there is certain parameters such that the whale gain under the attack is actually smaller than in the setting where there is no attack, implying that in such a whale-minnow model the rational censorship attack of~\cite{yeo} might not always be successful.

In order to determine the parameter settings whereby the attack is not guaranteed to be successful, we first pin down the range of $m$. To do so, we summarise the conditions that $m$ must satisfy.
\begin{enumerate}
    \item $m\geq N_0(k)$ from \Cref{proposition:tail-bound};
    \item $m<w+1$ from \Cref{lemma:prob-bound};
    \item $m<w$ from \Cref{lemma:unique-root}.
\end{enumerate}
From condition 1, we have $k\geq\left\lceil m+\frac{t-\alpha m}{\varepsilon}\right\rceil=\left\lceil m+\frac{(t-\alpha m)(n-w)}{1-\alpha w}\right\rceil$. 
Since the upper bound of $k$ has to be strictly smaller than $n$, this gives us the following lower bound on $m$: 
\begin{equation*}
    m+\frac{(t-\alpha m)(n-w)}{1-\alpha w}\leq n-1\iff m\geq\frac{t(n-w)-(1-\alpha w)(n-1)}{\alpha n-1}.
\end{equation*}
Together with condition $3$, we obtain the feasible set $\mathcal{F}$ for $m$ as follows:
\begin{equation} \label{eqn:m-range}
    \mathcal{F}\coloneqq\mathbb{N}\cap\left[\frac{t(n-w)-(1-\alpha w)(n-1)}{\alpha n-1},w-1\right].
\end{equation}

We now show that the feasible set of $m$ is non-trivial.

\begin{proposition}\label{prop:nontrivial}
    $\mathcal{F}$ is not empty.
\end{proposition}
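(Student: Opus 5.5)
To show $\mathcal{F}\neq\emptyset$ it suffices to exhibit one natural number in the interval of \Cref{eqn:m-range}. The candidate is $m=w-1$, the right endpoint, which is a natural number because $w\geq 1$. It lies in $\mathcal{F}$ exactly when the left endpoint does not exceed it:
\begin{equation*}
    \frac{t(n-w)-(1-\alpha w)(n-1)}{\alpha n-1}\leq w-1 .
\end{equation*}
So the whole proof reduces to verifying this one inequality under the standing assumptions of the model.

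First I will clear denominators. The normalisation $\alpha w+\varepsilon(n-w)=1$ gives $\alpha n-1=(\alpha-\varepsilon)(n-w)>0$, since whales are heavier than minnows. Multiplying through by $\alpha n-1$ turns the condition into
\begin{equation*}
    t(n-w)\leq(1-\alpha w)(n-1)+(w-1)(\alpha n-1).
\end{equation*}
Expanding the right-hand side gives $n-w-\alpha n+\alpha w$, which simplifies to $(1-\alpha)(n-w)$. Dividing by $n-w>0$, the condition becomes simply
\begin{equation*}
    t\leq 1-\alpha .
\end{equation*}
In words, the attacking coalition must be able to reach the power threshold without using one specific whale.

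The main obstacle is justifying $t\leq 1-\alpha$, because the paper never states it explicitly. I would derive it from the ``no big player'' assumption recalled in the introduction: no single validator may be indispensable to every attacking coalition. If $t>1-\alpha$, then every coalition reaching power $t$ would have to contain every whale, since the complement of any one whale has total power only $1-\alpha$. That contradicts the assumption, so $t\leq 1-\alpha$.

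If this reading is considered too informal, I would state $t\leq 1-\alpha$ as an explicit hypothesis of the proposition. Equivalently, I would observe that the power threshold $t$ must be attainable with $N_k\leq w-1$ whales, which is exactly the requirement $N_0\leq w-1$ behind condition 1.
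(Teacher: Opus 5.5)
Your proposal is correct and follows essentially the paper's argument: both reduce nonemptiness of $\mathcal{F}$ to the left endpoint being at most $w-1$, and then to the no-big-player condition $t\leq 1-\alpha$, which the paper likewise invokes as ``the original assumption on $\alpha$''. The only difference is that you clear denominators directly and obtain the exact equivalence (left endpoint $\leq w-1$ iff $t\leq 1-\alpha$), whereas the paper passes through the intermediate bound $\frac{t-(1-\alpha w)}{\alpha}$ --- a step that itself silently requires $t\leq 1-\alpha$, so your one-step version is slightly cleaner.
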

\begin{proof}
    First,
    \begin{equation*}
        \frac{t(n-w)-(1-\alpha w)(n-1)}{\alpha n-1}\leq\frac{t-(1-\alpha w)}{\alpha}.
    \end{equation*}
    Then,
    \begin{equation*}
        \frac{t-(1-\alpha w)}{\alpha}\leq w-1\iff\alpha\leq 1-t,
    \end{equation*}
    which is the original assumption on $\alpha$.\qed
\end{proof}

~\Cref{prop:nontrivial} implies that we can always choose an $m$ s.t. $k_m\leq n-1$.

It remains to show that we can choose $\hat{k}(m,n)\leq n-1$. Since $\hat{k}$ is the unique solution of $f(k)=0$ and $f$ is increasing in $k$,
\begin{equation*}
    \hat{k}\leq n-1\iff f(n-1)\geq0\iff C(m,n)(n-1)^2\geq1.
\end{equation*}
Let
\begin{align*}
    G(m,n)&\coloneqq C(m,n)(n-1)^2\\
    &=\frac{(\alpha n-1)(1-\alpha w)(n-w)^{w-m}(n-1)wt}{m\Lambda(m)n^2}.
\end{align*}
\begin{lemma} \label{lemma:g-m-decreasing}
    $G(m,n)$ is decreasing in $m$ for $m\in\mathcal{F}$ and increasing in $n$ for $n\geq w+1$.
\end{lemma}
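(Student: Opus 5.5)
We treat the two monotonicity claims separately. In each case we split $G(m,n)$ into factors that depend only on the variable in question, and then show that the relevant ratio or logarithmic derivative has the right sign.

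\emph{Monotonicity in $m$.} For fixed $n$, the part of $G$ that depends on $m$ is
\[
h(m)\coloneqq\frac{(n-w)^{w-m}}{m\Lambda(m)}.
\]
Each factor can be handled on its own, and all three are positive:
\begin{itemize}
\item $(n-w)^{w-m}$ is non-increasing in $m$ whenever $n-w\geq 1$, which holds because $m\in\mathcal{F}$ forces $w\geq m+1$ and the model has $n>w$.
\item $m$ is strictly increasing.
\item $\Lambda(m)=\sum_{j=0}^{m-1}\binom{w}{j}$ is strictly increasing in $m$ on $1\leq m\leq w$, since $\Lambda(m+1)-\Lambda(m)=\binom{w}{m}>0$.
\end{itemize}
So the numerator is non-increasing and the denominator strictly increasing. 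Concretely, the ratio $h(m+1)/h(m)=\frac{m\Lambda(m)}{(n-w)(m+1)\Lambda(m+1)}$ is strictly less than $1$. Hence $G$ is strictly decreasing in $m$ for $m\in\mathcal{F}$ with $m\geq1$. The case $m=0$ would make $G$ undefined, and it is excluded just as in the proof of \Cref{lemma:unique-root}.

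\emph{Monotonicity in $n$.} For fixed $m$, we treat $n$ as a real variable and write
\[
\log G=\log(\alpha n-1)+(w-m)\log(n-w)+\log(n-1)-2\log n+\text{const}.
\]
Differentiating gives
\[
\frac{\alpha}{\alpha n-1}+\frac{w-m}{n-w}+\frac{1}{n-1}-\frac{2}{n}.
\]
We then compare the positive terms with the negative term $-2/n$:
\begin{itemize}
\item $\frac{\alpha}{\alpha n-1}>\frac1n$, because $\alpha n>\alpha n-1$. This uses $\alpha n>1$, which holds since the whale power $\alpha$ exceeds the average power $1/n$; this positivity is already implicit in $C>0$.
\item $\frac{1}{n-1}>\frac1n$.
\item $\frac{w-m}{n-w}\geq0$, since $m\leq w-1$.
\end{itemize}
The first two terms together exceed $2/n$, so the derivative is strictly positive for $n\geq w+1$. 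Therefore $G$ is increasing in $n$.

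\emph{Main obstacle.} The delicate points are the side conditions rather than the calculus. First, we must check that $\alpha n-1>0$ and $1-\alpha w>0$ throughout the range, so that every logarithm is defined. The second condition holds because the minnows have positive total power $\varepsilon(n-w)=1-\alpha w$. Second, we must ensure that $n-w\geq1$, so that the power $(n-w)^{w-m}$ does not increase with $m$. Once these are recorded, both monotonicity statements follow directly.
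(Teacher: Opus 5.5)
Your proof is correct and follows the paper's route: monotonicity in $m$ factor by factor (you add an explicit ratio and the $m=0$ caveat), and the logarithmic derivative in $n$. The only difference is cosmetic: you bound $\frac{\alpha}{\alpha n-1}>\frac1n$ and drop $\frac{w-m}{n-w}$, whereas the paper drops $\frac{\alpha}{\alpha n-1}$ and uses $\frac{w-m}{n-w}\geq\frac{1}{n-w}\geq\frac{1}{n-1}$; both arguments rely on $\alpha n>1$, which you rightly state explicitly.
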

\begin{proof}
    Since $n-w\geq1$ and $w-m\geq1$, higher $m$ leads to lower exponent. Also, $m\Lambda(m)$ in the denominator increases in $m$. Hence, $G$ decreases in $m$.

    For monotonicity in $n$, take the logarithmic derivative.
    \begin{align*}
       \frac{1}{G}\frac{\partial G}{\partial n}&=\frac{\alpha}{\alpha n-1}+\frac{w-m}{n-w}+\frac{1}{n-1}-\frac{2}{n}\\
       &>\frac{1}{n-w}+\frac{1}{n-1}-\frac{2}{n}\\
       &\geq\frac{2}{n-1}-\frac{2}{n}>0.
    \end{align*}\qed
\end{proof}
By \Cref{lemma:g-m-decreasing}, a sufficient condition for $\hat{k}\leq n-1$ is
\begin{equation*}
    G(w-1,n)\geq1
\end{equation*}
for all $n$.

Finally, we obtain our main result. Recall that~\Cref{proposition:uniform-bound-for-hypergeometric} gives an upper bound for $\overline{\Phi}(k,t)$, which is itself an upper bound on the whale's gain. Our main theorem presents an upper bound for the whale's gain $\Phi(k,t)$ itself.

\begin{theorem} \label{thm:validity-range}
    There exists $n_0(\alpha,w,t)$ s.t. for all $n\geq n_0$, there exists $m\in\mathcal{F}(n,\alpha,w,t)$ s.t.
    \begin{equation*}
        \Phi(k,t)<\alpha w
    \end{equation*}
    for all $\max\{k_m,\hat{k}(m,n),w\}\leq k<n$.
\end{theorem}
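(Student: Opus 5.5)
The plan is to chain the earlier results. For any admissible $m$ and $k$ we have $\Phi(k,t)<\overline{\Phi}(k,t)$ by \Cref{proposition:tail-bound}, and $\overline{\Phi}(k,t)<\alpha w$ by \Cref{proposition:uniform-bound-for-hypergeometric}. So the work reduces to one task: find, for all large $n$, an $m\in\mathcal{F}(n,\alpha,w,t)$ such that the interval $[\max\{k_m,\hat{k}(m,n),w\},n)$ is nonempty, and check that every $k$ in it meets the hypotheses of the propositions used.

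Concretely, I would carry out the following steps in order.

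First, fix $m\in\mathcal{F}$. The set $\mathcal{F}$ is nonempty by \Cref{prop:nontrivial}, and I would take its largest element, $m=w-1$, whenever this is feasible. Since the lower endpoint of $\mathcal{F}$ tends to $(t-(1-\alpha w))/\alpha\le w-1$ as $n\to\infty$, this choice works for large $n$; otherwise I would take any element of $\mathcal{F}$. Membership in $\mathcal{F}$ gives $m\le w-1<w$, so \Cref{lemma:prob-bound} and \Cref{lemma:unique-root} apply. It also gives $k_m\le n-1$, by the derivation of the lower endpoint of $\mathcal{F}$.

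Second, for $k\ge k_m$ we have $m\ge N_0(k)$, so \Cref{proposition:tail-bound} yields $\Phi(k,t)<\overline{\Phi}(k,t)$. For $k\ge w$ and $k<n$ the hypotheses of \Cref{lemma:prob-bound} hold. By the monotonicity of $f$ established after \Cref{lemma:unique-root}, $k\ge\hat{k}$ is equivalent to \eqref{eqn:algebraic-inequality}. Hence \Cref{proposition:uniform-bound-for-hypergeometric} gives $\overline{\Phi}(k,t)<\alpha w$, and combining the two inequalities gives the claim.

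Third, I would show that the range of $k$ is nonempty for large $n$, so that the statement is not vacuous. We already have $k_m\le n-1$, and $w\le n-1$ once $n\ge w+1$. For $\hat{k}\le n-1$ I would use the criterion $G(m,n)\ge1$. By \Cref{lemma:g-m-decreasing}, $G$ is decreasing in $m$, so $G(m,n)\ge G(w-1,n)$ for every $m\in\mathcal{F}$, and it suffices to show $G(w-1,n)\ge1$ for large $n$. Since $G$ is increasing in $n$, the set of $n$ where this holds is an up-set. To see that the set is nonempty, look at the explicit formula:
\[
G(w-1,n)=\frac{(\alpha n-1)(1-\alpha w)(n-w)(n-1)wt}{(w-1)\Lambda(w-1)n^2}.
\]
This grows linearly in $n$, because the numerator is cubic in $n$ and the denominator quadratic, and $\alpha n-1>0$ since $\alpha>\varepsilon=(1-\alpha w)/(n-w)$. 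Therefore I would define $n_0(\alpha,w,t)$ as the maximum of three quantities: $w+1$, the threshold beyond which $G(w-1,n)\ge1$, and the threshold beyond which $w-1\in\mathcal{F}$.

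The main obstacle is the bookkeeping of the standing assumptions. The definition of $\mathcal{F}$ and the root lemma need $m\ge1$ (so that $\Lambda(m)>0$) and $m<w$. The case $m=0$ must either be excluded or handled separately: there $\Phi_2$ vanishes, so the bound should be immediate. I also need the model assumptions $\alpha\le1-t$, $\alpha>\varepsilon$ and $w\ge2$ to be invoked so that $\mathcal{F}$ and $G$ behave as claimed. I expect to handle $w=1$ separately or to note that it is excluded by the hypothesis $m<w$ together with $m\ge1$.
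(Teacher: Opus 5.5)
Your proposal is correct and follows the paper's proof. Like the paper, you chain \Cref{proposition:tail-bound} with \Cref{proposition:uniform-bound-for-hypergeometric}, then make the range of $k$ non-vacuous by reducing $\hat{k}\le n-1$ to $G(w-1,n)\ge 1$ via \Cref{lemma:g-m-decreasing}. The main difference is that the paper makes $n_0$ explicit: using $(\alpha n-1)(n-w)(n-1)>\alpha n^3-(\alpha w+\alpha+1)n^2$ and $\Lambda(w-1)=2^w-w-1$, it obtains $n_0=\left\lceil w+1+\frac{1}{\alpha}+\frac{(w-1)(2^w-w-1)}{\alpha(1-\alpha w)wt}\right\rceil$, whereas you only argue existence from the linear growth of $G(w-1,n)$. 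Also, your extra threshold for $w-1\in\mathcal{F}$ is unnecessary, since the proof of \Cref{prop:nontrivial} already shows the lower endpoint of $\mathcal{F}$ is at most $w-1$ for every $n$.
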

\begin{proof}
    It remains to determine $n_0$ so that $\hat{k}(m,n)<n$.

    Since $G$ is increasing in $n$, $n_0$ can be chosen as the smallest integer $n$ s.t.
    \begin{equation*}
        G(w-1,n)\geq1,
    \end{equation*}
    i.e.
    \begin{equation*}
        \frac{(\alpha n-1)(1-\alpha w)(n-w)(n-1)wt}{(w-1)(2^w-w-1)n^2}\geq1.
    \end{equation*}
    Since
    \begin{equation*}
         (\alpha n-1)(n-w)(n-1)>\alpha n^3-(\alpha w+\alpha+1)n^2,
    \end{equation*}
    it suffices to require
    \begin{equation*}
        \frac{[\alpha n-(\alpha w+\alpha+1)](1-\alpha w)wt}{(w-1)(2^w-w-1)}\geq1,
    \end{equation*}
    i.e.
    \begin{equation*}
        n\geq w+1+\frac{1}{\alpha}+\frac{(w-1)(2^w-w-1)}{\alpha(1-\alpha w)wt}.
    \end{equation*}
    Hence, we can choose $n_0(\alpha,w,t)=\left\lceil w+1+\frac{1}{\alpha}+\frac{(w-1)(2^w-w-1)}{\alpha(1-\alpha w)wt}\right\rceil$.\qed
\end{proof}

\section{Numerical experiments} \label{sec:experiment}
In this section, we choose the threshold $t=\frac{1}{2}$ and estimate $\Phi$ using Monte-Carlo for two of the larger and more widely-used proof-of-stake networks: Ethereum and Solana (\Cref{tab:top10-shares}).
\begin{table}[htbp!]
\centering
    \begin{tabular}{r l r @{\hspace{2em}} l r}
           & \multicolumn{2}{c}{Ethereum \cite{hildobby}} & \multicolumn{2}{c}{Solana \cite{helius}}\\
        \hline
        1  & Lido               & $22.3\%$ & Galaxy                       & $3.23\%$\\
        2  & Binance            & $8.3\%$  & Helius                       & $3.22\%$\\
        3  & ether.fi           & $5.2\%$  & Coinbase 02                  & $3.03\%$\\
        4  & Coinbase           & $4.5\%$  & Figment                      & $2.57\%$\\
        5  & Kraken             & $3.8\%$  & Ledger by Figment            & $2.44\%$\\
        6  & Figment            & $3.7\%$  & P2P.org                      & $1.98\%$\\
        7  & Blockdaemon        & $2.7\%$  & Everstake                    & $1.67\%$\\
        8  & Everstake          & $1.8\%$  & Binance staking              & $1.55\%$\\
        9  & Kiln               & $1.7\%$  & \texttt{G9x1...j6TY} (anon)  & $1.43\%$\\
        10 & Upbit              & $1.3\%$  & \texttt{H74q...UAst} (anon)  & $1.39\%$
    \end{tabular}
    \caption{Top $10$ stake shares in Ethereum and Solana.}
\label{tab:top10-shares}
\end{table}

\paragraph{Datasets.}
We took data pertaining to the number of validators as well as their amounts of stakes from~\cite{dune}. We note that although the number of validators in Ethereum is around $\sim1$M, each validator has the same amount of stake of $32$ETH. Thus we use the stake share percentages (as presented in~\Cref{tab:top10-shares}) to group the validators into whales with each whale controlling some number of validators corresponding the the stake share percentages above. This gives us a total of 10040 validators in Ethereum. We repeat the same process with Solana to get 1397 validators.

\paragraph{Empirical estimation of stake distribution and whale count.}
To check whether our model of whales and minnows is realistic and reflected in real-life blockchain systems, we tried to fit three parametric families of distributions that are mainly used to model data with very large and small values: exponential, power law, and stretched exponential function\footnote{The density of the stretched exponential function is $f(x)\propto e^{-\left(\frac{x}{\lambda}\right)^\beta}$.} to both the Ethereum and Solana stake distribution datasets.
The goodness of fit result is presented in~\Cref{tab:fit-comparison} and depicted in~\Cref{fig:stake-distribution}.
From this, we also obtain a natural choice for the whale count $w$ parameter for Ethereum and Solana.
Both exponential and stretched exponential distributions have a ``characteristic length'' $\lambda$, which is a natural candidate for the whale count. 

\begin{table}[htbp!]
\centering
    \begin{tabular}{l c c}
                    & $R^2$ (Ethereum) & $R^2$ (Solana)\\
        \hline
        Exponential & $0.971$ & $0.947$\\
        Power law & $0.927$ & $0.892$\\
        Stretched exponential & $0.994$ & $0.984$
    \end{tabular}
    \caption{The goodness of fit ($R^2$) for three parametric distribution families. 
    The power law distribution has the worst fit out of the 3 families.
    The stretched exponential is a better fit compared to the exponential distribution. The reason is that in both Ethereum and Solana there is a noticeable amount of users that control large amount of stake (i.e., the whales), so the distribution of stake drops significantly after accounting for these whales. 
    However, minnows in the tail region have roughly the same amount of stake, so the distribution drops very slowly in this region. A pure exponential distribution cannot capture this mismatch.}
\label{tab:fit-comparison}
\end{table}

\begin{figure}[htbp!]
\centering
    \includegraphics[width=\textwidth]{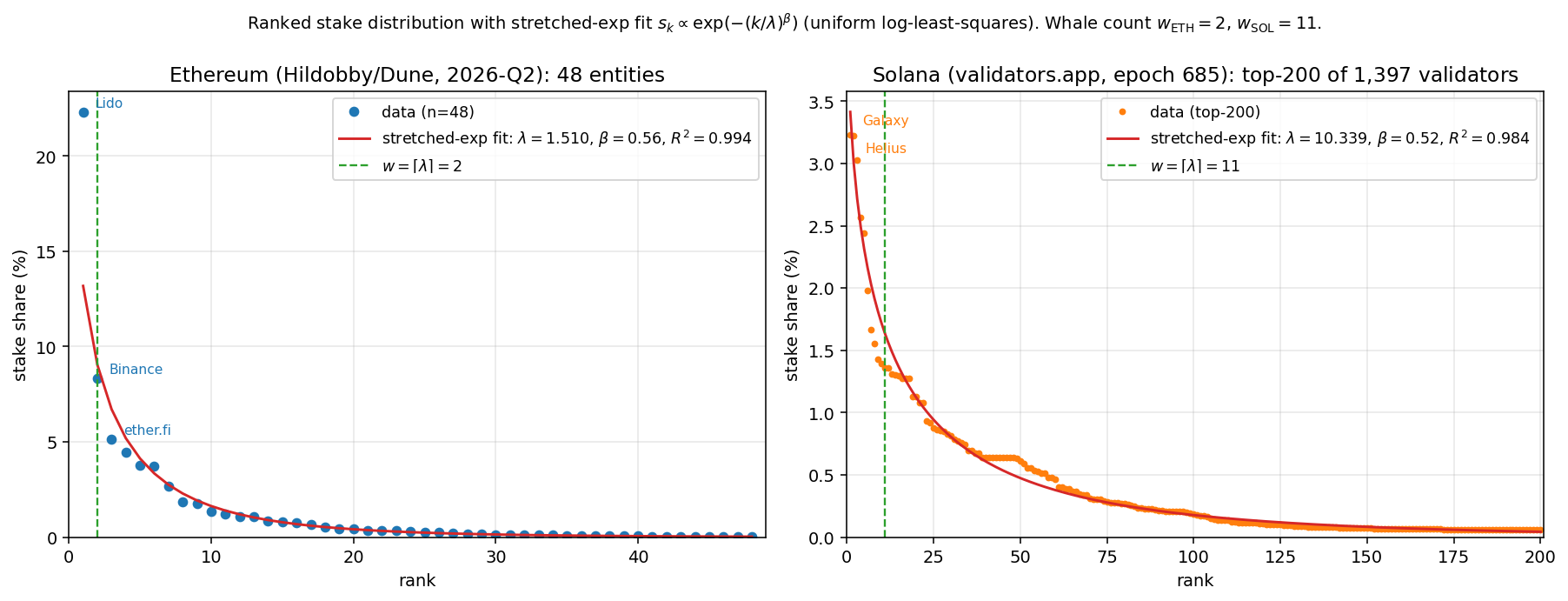}
    \caption{Fitting the stretched exponential distribution to Ethereum real world data in the plot on the left, and Solana in the plot on the right. The green dashed line: $w=\lceil\lambda\rceil$ is chosen as the whale count for each dataset.}
\label{fig:stake-distribution}
\end{figure}

\paragraph{Empirical estimation of whale's gain.}
Next, for each chain, we performed two experiments to empirically estimate the whale's gain depending on the number of attackers $k$: first, using the real whales' voting power (from real world data collected from~\cite{hildobby,helius}). 
Second, assuming all the whales have the same voting power.
We depict our results in~\Cref{fig:result}.
We note that all $4$ curves exhibit $3$ regimes: a relatively flat phase at a small attacking coalition size $k$, where the attack is favorable to whales; a transition phase where the attack is becoming detrimental to whales; a final upward phase that touches the whale fair play value. We stress that the final upward phase ending at the whale fair play value is not a coincidence: when $k=n$, every participant must be included in the attack for the attack being undetected. 
However, including all validators in the attack effectively nullifies the impact of the attack, rendering it to be exactly the case of where there is no attack, and every validator follows the honest strategy.

\begin{figure}[htbp!]
\centering
\includegraphics[width=\textwidth]{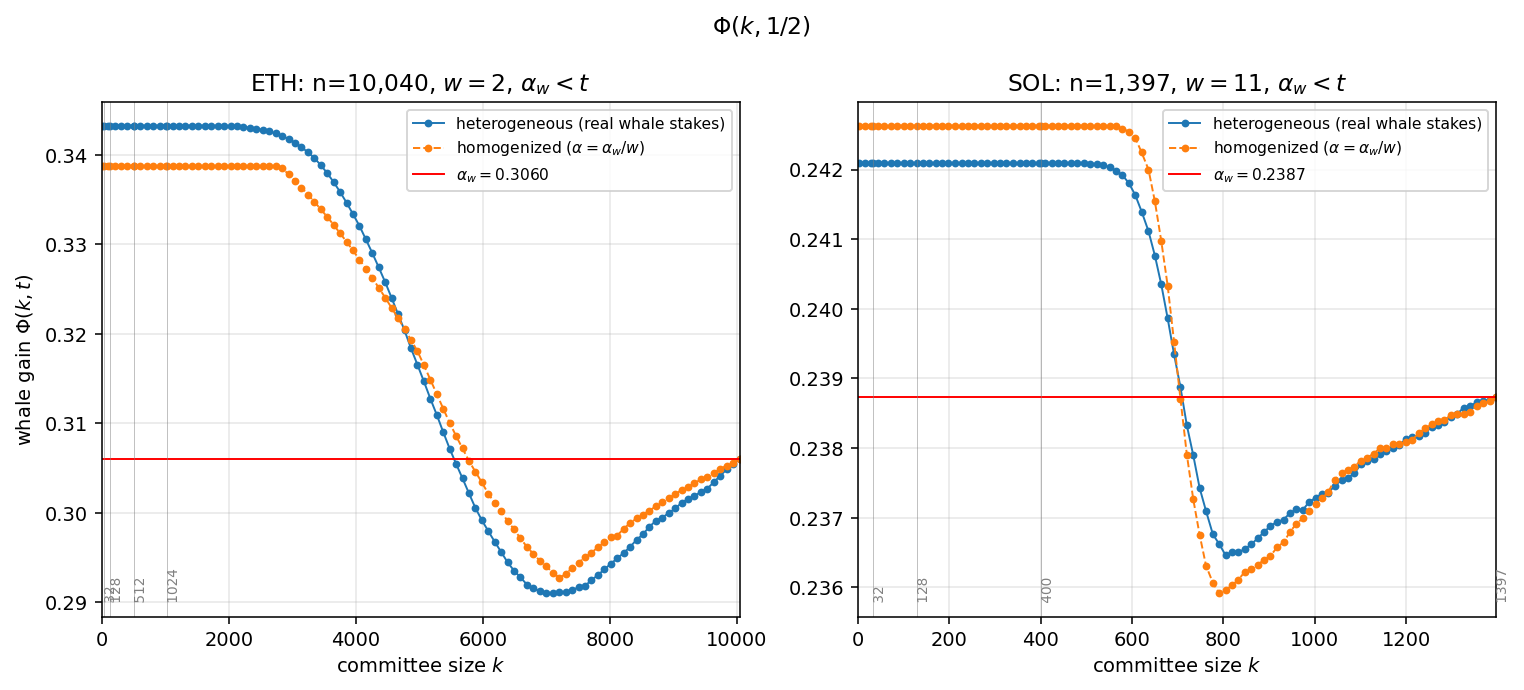}
\caption{$\Phi(k, t=1/2)$ vs. attack coalition size $k$. The plot on the left (resp. right) depicts the result for Ethereum (resp. Solana). The blue curve shows the estimation using real world whale stake numbers, and the orange curve shows the estimation using assuming all whales have the same stake. The red curve depicts the baseline whale fair play value.}
\label{fig:result}
\end{figure}

\paragraph{Verifying the validator lower bound.} 
We now verify if \Cref{thm:validity-range} applies to the setting where we assume all whales have the same voting power. Recall that~\Cref{thm:validity-range} gives a condition on the size of the validators to have a meaningful whale and minnow separation:
\begin{equation*}
    n\geq n_0=\left\lceil w+1+\frac{1}{\alpha}+\frac{(w-1)(2^w-w-1)}{\alpha(1-\alpha w)wt}\right\rceil.
\end{equation*}
For Ethereum, $n_0=19\ll n=10040$; for Solana, $n_0=224137\gg n=1397$. So \Cref{thm:validity-range} applies to Ethereum but says nothing about Solana. However, Monte Carlo simulation for Solana did show a reduction of whales' profit after $k>425$. 
This is because our theoretical bound of $n_0=\mathcal{O}(2^w)$ is very loose. Both \Cref{tab:top10-shares} and \Cref{fig:stake-distribution} show that Ethereum is more controlled by a few whales than Solana, which amounts to a lower number of whales $w$.

\paragraph{Attacking coalition size requirement.}
Lastly, we study the requirement on coalition size $\max\{k_m,\hat{k}(m,n),w\}$ in Ethereum given by \Cref{thm:validity-range}. The feasible set for $m$ is
\begin{equation*}
    \mathcal{F}(n,\alpha,w,t)=\{0,1\}. 
\end{equation*}
Since $m$ is a free parameter, we are free to choose it to minimize $k^*$. The result is
\begin{align*}
    k^{*}&\coloneqq\min_{m\in\mathcal{F}}\max\left\{w,k_m,\hat{k}(m,n)\right\}=7232,\\
    m^*&\coloneqq\arg\min_{m\in\mathcal{F}}\left\{\max\left\{w,k_m,\hat{k}(m,n)\right\}\right\}=0.
\end{align*}
Since $k^*<n$, the result is not vacuous. However, $k^*>k_{\text{exp}}=5781$ found by the Monte Carlo simulation, so the theoretical requirement is conservative.

\section{Discussion and implications} \label{sec:discussion}
We studied the censorship attack introduced in \cite{yeo}, where blockchain consensus participants can collude to exclude some other participants from earning any reward. We showed that in case that a finite amount of voting power is controlled by some handful amount of whales and the rest of voting power is uniformly distributed among a large number of minnows, if the detectability of this attack increase above certain threshold $k^*$, the attack will become detrimental to the whales (\Cref{thm:validity-range}). This will deter the whales from joining the attack in the first place. Without whales, it becomes difficult for minnows themselves to gather enough voting power to launch the attack.

\Cref{thm:validity-range} provides a conservative (way too large) estimate of the required amount of participants and the detectability level. We performed Monte Carlo simulation in \Cref{sec:experiment} to show that this reduction in whales' reward phenomenon occurs even when the condition of \Cref{thm:validity-range} is not met in Solana.

Our work initiates several interesting open questions and important directions of future work, which we highlight below.

\paragraph{Power centralisation.}
The threat of censorship is typically associated with centralisation of power. 
Perhaps somewhat counterintuitively, our work shows that the presence of whales might prevent censorship attacks under certain parameter regimes, and in particular with high detection costs and detection thresholds.
In contrast, the setting where there are no whales and only minnows is susceptible to the original censorship attack proposed by~\cite{yeo}.
This suggests that some degree of power centralisation can prevent certain types of censorship attacks, and it would be interesting to explore and analyse such censorship attacks under different voting power distributions.

\paragraph{Non-uniformly random arrival order.}
Recall that a key assumption underlying both the original attack and our model is the uniformly-random validator arrival order.
Another important and orthogonal direction of research is to consider the impact of a non-uniformly random arrival order on the attack success. 
In particular, the realistic setting where whales can influence the arrival order (whether through bribery or selection) and put themselves first in the attack. 
Such a non-uniformly random arrival process could influence both the success and detectability of the attack, and is an interesting direction of future research.

\bibliographystyle{splncs04}
\bibliography{references} 

\appendix
\section{Omitted proofs}\label{app:proofs}

\subsection{Proof of~\texorpdfstring{\Cref{proposition:uniform-bound-for-hypergeometric}}{Proposition 2}}\label{app:uniform}
\uniformbound*

\begin{proof}
    Define
    \begin{equation*}
        g(j,k)\coloneqq\frac{\alpha j}{\alpha j+\varepsilon(k-j)}.
    \end{equation*}
    Then,
    \begin{equation*}
        \frac{\partial^2g}{\partial j^2}=\frac{-2\alpha(\alpha-\varepsilon)\varepsilon k}{[\alpha j+\varepsilon(k-j)]^3}\leq-2\alpha(\alpha-\varepsilon)\varepsilon k.
    \end{equation*}
    Hence, $g(j,k)$ is $2\alpha(\alpha-\varepsilon)\varepsilon k$-strongly concave in $j$.

    By Jensen's inequality for strongly concave functions,
    \begin{align*}
        g(\E[N_k],k)-\E[g(N_k,k)]&\geq\frac{2\alpha(\alpha-\varepsilon)\varepsilon k}{2}\text{Var}[N_k]\\
        &=\frac{\alpha(\alpha-\varepsilon)\varepsilon k^2w(n-k)(n-w)}{n^2(n-1)}\\
        &\coloneqq f(k).
    \end{align*}
    On the other hand,
    \begin{align*}
        &\E[g(N_k,k)]=\sum_{j=0}^w\Prob(N_k=j)g(j,k),\\
        &g(\E[N_k],k)=\frac{\alpha\E[N_k]}{\alpha\E[N_k]+\varepsilon(k-\E[N_k])}=\frac{\alpha w}{\alpha w+\varepsilon(n-w)}=\alpha w.
    \end{align*}
    Then, for $m\geq1$,
    \begin{align*}
        \overline{\Phi}(k,t)&=\E[g(N_k,k)]+\sum_{j=0}^{m-1}\Prob(N_k=j)\left[\frac{\alpha m}{t}-g(j,k)\right]\\
        &\leq\alpha w-f(k)+\sum_{j=0}^{m-1}\Prob(N_k=j)\left[\frac{\alpha m}{t}-g(j,k)\right]\\
        &\coloneqq\alpha w-f(k)+\Delta(k).
    \end{align*}
    Hence, $f(k)>\Delta(k)\implies\overline{\Phi}(k,t)<\alpha w.$
    
    By \Cref{lemma:prob-bound}
    \begin{equation*}
        \Delta(k)<\sum_{j=0}^{m-1}\Prob(N_k=j)\frac{\alpha m}{t}<\frac{\alpha m}{t}\Lambda\left(\frac{n-k}{n-w}\right)^{w-m+1}.
    \end{equation*}
    Hence, \Cref{eqn:algebraic-inequality} is a sufficient condition for $f(k)>\Delta(k)$.\qed
\end{proof}

\end{document}